\def \rR {\mathbb{R}}
\def \bf#1 {\textbf{#1 }}
\def \sumt {\sum\limits}
\providecommand{\keywords}[1]
{
  \small	
  \textbf{\textit{Keywords:}} #1
}
\renewenvironment{proof}{\begin{addmargin}[1em]{0em}\begin{newproof}}{\end{newproof}\end{addmargin}\qed}
\newtheorem{thm}{Theorem}
\newtheorem{lm}{Lemma}
\newtheorem{cor}{Corollary}
\newtheorem*{lm*}{Lemma}
\newtheorem{defin}{Definition}
\newtheorem*{defin*}{Definition}
\def \sumt {\sum\limits}
\DeclareMathOperator{\diam}{diam}
\DeclareMathOperator{\dist}{dist}
\DeclareMathOperator{\Clo}{Clo}
\DeclareMathOperator{\Rad}{Rad}
\DeclareMathOperator{\Str}{Str}
\begin{document}

\title{Relations between average shortest path length and another centralities in graphs}
\author{Mikhail Tuzhilin%
  \thanks{Affiliation: Moscow State University, Electronic address: \texttt{mtu93@mail.ru};}
}
\date{}
\maketitle

\begin{abstract}
Relations between average shortest path length and radiality, closeness, stress centralities and average clustering coefficient were obtained for simple connected graphs.
\end{abstract}

\keywords{Networks, centralities, local and global properties of graphs, average shortest path length, Watts-Strogatz clustering coefficient.}


\section{Introduction.}
Centrality measures are important characteristics in  network science which contain intrinsic hidden information about networks. Centrality is a local (with relation to a vertex) or global (with relation to a whole graph) real functions on graphs. There exist many centralities~\cite{Borgatti}-~\cite{Lee}: local efficiency, radiality, closeness, betweenness, stress centralities, etc. One of the most important centralities for ``real'' networks (or small-world networks) are average clustering coefficient and average shortest path length~\cite{Watts}. The network is called small-world if it has small average shortest path length with relation to the size and big average clustering coefficient. This measures are separately calculated to prove that current network is small-world. In this article we proved that for geodetic graphs (or graphs with odd length cycles) with additional condition there is relation between average shortest path length and average clustering coefficient.

\section{Main definitions.}

All subsequent definitions are given for a simple, connected, undirected graph $G$. Let's denote by
\begin{itemize}
    \item $V(G)$ the set of vertices,\;\; $E(G)$ the set of edges and\;\; $A = \{a_{ij}\}$ adjacency matrix of graph $G$.
    \item Neighborhood $N(v)$ --- the induced subgraph in $G$ on vertices which adjacent to the vertex $v$,
    \item $N'(v)$ induced subgraph in $G$ on vertices $ = V\big(N(v)\big)\bigcup \{v\}$,
    \item $\bar{f}(x_1, x_2, ... , x_k)$, where $f$ is any function $V\times V \times ... \times V\rightarrow\rR$, the restriction of this function on $N'(v)$ (for example $\bar L(x,y)$ will be the average shortest path between $x$ and $y$ with restriction to subgraph $N'(v)$),
    \item $d_i = deg(v_i)$,
    \item $n = |V(G)|,\;\; m = |E(G)|$,
    \item $X(i) = X(v_i)$ for any $X$ --- set or function corresponding to vertex $v_i$,
\end{itemize}

Let's give definitions of centralities:
\begin{enumerate}
    \item \bf{Diameter} $diam(G) = \max_{s,t\in V(G)} dist(s,t)$.
    \item \bf{Average shortest path length} $L(G) = \frac 1 {n(n-1)} \sumt_{s\neq t} dist(s,t)$.
    \item \bf{Local cluster coefficient} 

    $c_i = c(i) = \frac {\text{number of edges in }N(i)} {\text{maximum possible number of edges in }N(i)}= \frac{2 |E(N(i))|} {d_i(d_i-1)}$. 
    \item \bf{Average clustering coefficient} 
    
    $C_{WS}(G) = \frac 1 {n} \sumt_{i\in V(G)} c_i = \frac 1 {n} \sumt_{i\in V(G)}  \frac{2 |E(N(i)))|} {d_i(d_i-1)} = \frac 1 n \sumt_{i\in V(G)} \frac {\sumt_{j, k\in V(G)} a_{ij} a_{jk} a_{ki}} {d_i (d_i-1)}$.
    \item \bf{Global clustering coefficient} 
    
    $C(G) = \frac {\text{number of closed triplets in $G$}} {\text{number of all triplets in $G$}} = \frac {\sumt_{i, j, k\in V(G)} a_{ij} a_{jk} a_{ki}} {\sumt_{i\in V(G)} d_i (d_i-1)}$.
    \item \bf{Closeness centrality} $Clo(v) = \frac {n-1} {\sumt_{t\in V(G)} dist(v, t)}$. 
    \item \bf{Radiality} $Rad(v) = \frac {\sumt_{t\in V(G), t\neq v} (diam(G)+1-dist(v, t))} {n-1}$.
    \item \bf{Stress} $Str(i) = \sumt_{s,t\in V(G),\; s\neq t\neq i} \sigma_{st}(i)$, where $\sigma_{st}(i)$ is the total number of shortest paths from $s$ to $t$ which contains vertex $i$.
\end{enumerate}

Note that all centralities are non-negative and $c_i, C_{WS}, \Clo(v)$ are less or equal 1.

Also let's give the definition of geodetic graph:
\begin{defin}
    If there exists the unique shortest path between any two vertices in $G$ then the graph $G$ is called \bf{geodetic}.
\end{defin} 
This definition is equivalent to the condition then there is no even cycles in a graph.

\section{Main results.}

Let's consider a induced subgraph $G'\subset G$. In general $G'$ can be not connected graph. In this case let's define the average shortest path length for vertices of $G'$ with relation to the distance $\dist$ in the ambient graph $G$. Let's call $L\big(N(i)\big)$ the local average shortest path length for the vertex $i$. 

Let's start with simple relations: let's proof the relation between local shortest path length and local clustering coefficient:

\begin{lm}\label{lm1}\label{lm1}
    $$L(N(i)) = 2-c_i.$$
\end{lm}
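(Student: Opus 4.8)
The plan is to unwind the definition of $L(N(i))$ and exploit the fact that any two distinct neighbors of $i$ are at distance exactly $1$ or $2$ in the ambient graph $G$. By definition,
\[
L(N(i)) = \frac{1}{d_i(d_i-1)}\sumt_{\substack{s,t\in V(N(i))\\ s\neq t}} \dist_G(s,t),
\]
where the sum runs over ordered pairs of distinct neighbors of $i$.

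The key observation is the following dichotomy: if $s,t$ are adjacent in $G$ (equivalently, $st$ is an edge of the induced subgraph $N(i)$) then $\dist_G(s,t)=1$; otherwise the path $s$--$i$--$t$ shows $\dist_G(s,t)\le 2$, and since $s\neq t$ are non-adjacent we get $\dist_G(s,t)=2$ exactly. Hence each term of the sum is $1$ or $2$. First I would count: there are $d_i(d_i-1)$ ordered pairs $(s,t)$ in total, of which exactly $2|E(N(i))|$ are adjacent pairs (contributing $1$ each) and the remaining $d_i(d_i-1)-2|E(N(i))|$ are non-adjacent pairs (contributing $2$ each).

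Summing, the numerator equals $2|E(N(i))| + 2\bigl(d_i(d_i-1)-2|E(N(i))|\bigr) = 2d_i(d_i-1) - 2|E(N(i))|$. Dividing by $d_i(d_i-1)$ gives
\[
L(N(i)) = 2 - \frac{2|E(N(i))|}{d_i(d_i-1)} = 2 - c_i,
\]
using the definition of the local clustering coefficient $c_i$ from Section~2.

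Honestly there is no serious obstacle here; the only point requiring a little care is the degenerate case $d_i\le 1$, where both $c_i$ and $L(N(i))$ involve the empty or singleton neighborhood and the denominator $d_i(d_i-1)$ vanishes — I would simply assume $d_i\ge 2$ (as is implicit in the definition of $c_i$) so that the expression is well defined. The only other thing worth stating explicitly is why distances are computed in $G$ rather than in $N(i)$ (so that $\dist$ is finite even when $N(i)$ is disconnected), which is exactly the convention fixed just before the lemma.
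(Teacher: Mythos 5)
Your proof is correct and follows essentially the same route as the paper: split the ordered pairs of neighbors of $i$ into adjacent pairs (distance $1$, counted by $2|E(N(i))|$) and non-adjacent pairs (distance exactly $2$ via the path through $i$), sum, and divide by $d_i(d_i-1)$. Your explicit remarks about the degenerate case $d_i\le 1$ and about distances being taken in the ambient graph $G$ are sensible additions but do not change the argument.
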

\begin{proof}
 $$
L(N(i)) = \frac 1 {d_i(d_i-1)} \sumt_{s,t\in N(i), s\neq t} dist(s,t) = \frac 1 {d_i(d_i-1)} \sumt_{(s,t)\in E(N(i))} dist(s,t)+ \sumt_{s,t\in N(i), (s,t)\notin E(N(i))} dist(s,t) = 
$$
$$
 = \frac 1 {d_i(d_i-1)} (2 |E(N(i))| + \sumt_{(s,i), (i, t)\in E(G), (s,t) \notin E(G)} dist(s,t)) = 
$$ 
$$ 
= \frac 1 {d_i(d_i-1)} (2 |E(N(i))|+2 (d_i(d_i-1) - 2 | E(N(i))|)) = 2 - c_i.
$$
Note that shortest paths for vertices in $N(i)$ are defined corresponding to whole graph $G$.
\end{proof}

Averaging by $i$ we obtain simple corollary about the relation between local shortest path length and average clustering coefficient.

\begin{cor}
    $$C_{WS}(G) = 2- \frac 1 n \sumt_{i\in V(G)}L\big(N(i)\big).$$
\end{cor}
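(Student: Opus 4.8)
The statement to prove is $C_{WS}(G) = 2 - \frac{1}{n}\sum_{i\in V(G)} L\big(N(i)\big)$, given Lemma~\ref{lm1} which asserts $L(N(i)) = 2 - c_i$ for each vertex $i$. The plan is essentially to average the per-vertex identity of Lemma~\ref{lm1} over all $n$ vertices of $G$ and recognize the resulting average of the $c_i$ as the Watts--Strogatz clustering coefficient by its definition.

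Concretely, I would proceed as follows. First, invoke Lemma~\ref{lm1} to rewrite $c_i = 2 - L\big(N(i)\big)$ for every $i \in V(G)$; one should note that this requires $d_i \geq 2$ so that $c_i$ (and the average $L(N(i))$ over the $d_i(d_i-1)$ ordered pairs) is well-defined, and I would either assume $G$ has minimum degree at least $2$ or adopt the standard convention $c_i = 0$ when $d_i \le 1$, in which case $L(N(i))$ should be set to $2$ consistently so the identity still holds. Second, sum this over all $i$ and divide by $n$:
\[
\frac{1}{n}\sum_{i\in V(G)} c_i = \frac{1}{n}\sum_{i\in V(G)} \big(2 - L(N(i))\big) = \frac{1}{n}\cdot 2n - \frac{1}{n}\sum_{i\in V(G)} L(N(i)) = 2 - \frac{1}{n}\sum_{i\in V(G)} L(N(i)).
\]
Third, apply the definition of the average clustering coefficient, $C_{WS}(G) = \frac{1}{n}\sum_{i\in V(G)} c_i$, to identify the left-hand side, which yields the claimed equality.

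There is essentially no technical obstacle here: the corollary is a one-line averaging consequence of Lemma~\ref{lm1}, and the only point demanding a sentence of care is the degenerate case of vertices with degree $0$ or $1$, where the local averages involve an empty or singleton neighborhood and one must fix a convention so that the term-by-term identity $L(N(i)) = 2 - c_i$ is vacuously or conventionally true. Having dispatched that, the summation and the substitution of the definition of $C_{WS}$ complete the argument.
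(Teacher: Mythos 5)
Your proposal is correct and matches the paper's argument exactly: the corollary is obtained by averaging the identity $L(N(i)) = 2 - c_i$ of Lemma~\ref{lm1} over all vertices and invoking the definition of $C_{WS}$. Your additional remark about the convention for vertices of degree $0$ or $1$ is a reasonable point of care that the paper leaves implicit, but it does not change the approach.
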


Let's proof the relation between shortest path length in subgraph $N'(i)$ and shortest path length in $N(i)$.

\begin{lm}
    $$L\big(N'(i)\big) =  \frac {(d_i-1)L\big(N(i)\big)+2} {d_i+1}.$$
\end{lm}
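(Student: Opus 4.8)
The plan is to unfold the definition of the (ambient‑distance) average shortest path length on the induced subgraph $N'(i)$ and split the double sum according to whether the vertex $i$ is one of the two endpoints. Since $|V(N'(i))| = d_i+1$, the normalizing factor is $\frac{1}{(d_i+1)d_i}$, so
$$
L\big(N'(i)\big) = \frac{1}{(d_i+1)d_i}\sum_{\substack{s,t\in N'(i)\\ s\neq t}} \dist(s,t),
$$
where, as in Lemma~\ref{lm1}, all distances are taken in the ambient graph $G$.

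Next I would partition the ordered pairs $(s,t)$ with $s\neq t$ into those lying entirely in $N(i)$ and those having $i$ as one coordinate. The first group contributes $\sum_{s,t\in N(i),\, s\neq t}\dist(s,t) = d_i(d_i-1)\,L\big(N(i)\big)$ directly from the definition of $L\big(N(i)\big)$. For the second group, every vertex of $N(i)$ is by definition adjacent to $i$, so $\dist(i,s)=\dist(s,i)=1$ for each of the $d_i$ neighbors $s$; counting both orders this yields exactly $2d_i$.

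Combining the two contributions gives $\sum_{s,t\in N'(i),\, s\neq t}\dist(s,t) = d_i(d_i-1)L\big(N(i)\big) + 2d_i$, and dividing by $(d_i+1)d_i$ and cancelling the common factor $d_i$ produces
$$
L\big(N'(i)\big) = \frac{d_i(d_i-1)L\big(N(i)\big)+2d_i}{(d_i+1)d_i} = \frac{(d_i-1)L\big(N(i)\big)+2}{d_i+1},
$$
as claimed. There is no real obstacle here; the only points requiring care are bookkeeping of ordered versus unordered pairs in the normalization (matching the convention used in the definition of $L$) and remembering that the distances in $N(i)$ and $N'(i)$ are inherited from $G$, not computed inside the subgraph, so that the term $d_i(d_i-1)L\big(N(i)\big)$ may legitimately be substituted.
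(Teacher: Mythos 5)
Your proof is correct and follows essentially the same route as the paper: unfold the definition of $L\big(N'(i)\big)$ with the normalization $\frac{1}{(d_i+1)d_i}$, split the ordered pairs into those inside $N(i)$ (contributing $d_i(d_i-1)L\big(N(i)\big)$) and those involving $i$ (contributing $2d_i$ since every neighbor is at distance $1$). The paper's proof is just a one-line version of this same computation, so your write-up simply supplies the bookkeeping it leaves implicit.
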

\begin{proof}
    By definition
    $$
        L\big(N'(i)\big) = \frac 1 {(d_i+1)d_i} \sumt_{s,t\in V(N'(i)),\;s\neq t} \dist(s,t) = \frac {d_i-1} {d_i+1} L\big(N(i)\big) + \frac 2 {d_i+1}.
    $$
\end{proof}

Let's proof the relation between shortest path length in a induced subgraph and shortest path length in ambient graph if induced subgraph is obtained from ambient graph by deleting one vertex.

\begin{thm}
    Let a graph $G$ is obtained from a connected simple graph $G'$ by deleting one vertex and $|V(G)| = n$. Then
    $$L(G')\geq \frac n {n+1} L(G),$$
    where the average shortest path length $L(G)$ is defined in the ambient graph $G'$, if $G$ is not connected.
\end{thm}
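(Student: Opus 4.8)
The plan is to derive the bound from a single application of the triangle inequality in the ambient graph $G'$. Write $V(G') = V(G)\cup\{v\}$, where $v$ is the deleted vertex, so $|V(G)|=n$ and $|V(G')|=n+1$; throughout, all distances are taken in $G'$ (this is the stated convention, and in any case it is the only meaningful one once $G$ is allowed to be disconnected). First I would set $D:=\sumt_{s\neq t\in V(G)}\dist(s,t)$ and $S:=\sumt_{u\in V(G)}\dist(u,v)$, and split the defining sum for $L(G')$ according to whether a pair of vertices avoids $v$ or contains it:
$$ n(n+1)\,L(G') \;=\; \sumt_{s\neq t\in V(G')}\dist(s,t) \;=\; D + 2S . $$
Since $D = n(n-1)\,L(G)$ by definition, clearing denominators (using $n\ge 2$) shows that the claimed inequality $L(G')\ge\frac{n}{n+1}L(G)$ is equivalent to the single estimate $2(n-1)\,S\ge D$.

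Second, I would prove $2(n-1)\,S\ge D$. For each ordered pair $(s,t)$ of distinct vertices of $V(G)$ the triangle inequality in $G'$ gives $\dist(s,t)\le \dist(s,v)+\dist(v,t)$. Summing over all $n(n-1)$ such pairs and collecting terms, each quantity $\dist(u,v)$ with $u\in V(G)$ is counted exactly $2(n-1)$ times: it occurs as $\dist(s,v)$ for each of the $n-1$ admissible choices of $t$, and as $\dist(v,t)$ for each of the $n-1$ admissible choices of $s$. Hence the right-hand side totals $2(n-1)\,S$, so $D\le 2(n-1)\,S$. Substituting back, $n(n+1)\,L(G') = D+2S \ge D + \tfrac{D}{n-1} = \tfrac{n}{n-1}\,D = n^2\,L(G)$, and dividing by $n(n+1)$ finishes the proof. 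I would also record the equality case: equality forces $\dist(s,t)=\dist(s,v)+\dist(v,t)$ for every pair in $V(G)$, i.e.\ every shortest path between two vertices of $G$ passes through $v$; combined with connectedness of $G'$ this forces $V(G)$ to be independent and $v$ adjacent to all of it, so $G'$ is exactly a star with centre $v$.

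I do not expect a genuine obstacle here; the argument is bookkeeping around one triangle inequality. The two points that need care are, first, the combinatorial multiplicity --- it is $2(n-1)$, because each vertex of $V(G)$ appears as an endpoint of the split path on both sides, and using $n-1$ or $2n$ would break the constant; and second, the interpretation of $L(G)$. The ambient-distance convention is essential, not cosmetic: if one instead used $\dist_G$ in the case where $G$ happens to be connected, the inequality can fail badly --- for instance, take $G'$ to be a path $u_1-\cdots-u_n$ together with one extra vertex joined to every $u_i$; deleting that vertex makes $L(G)$ grow like $n$ while $L(G')$ stays bounded. So the statement must be read with every distance, in $G$ and in $G'$ alike, measured in the ambient graph $G'$.
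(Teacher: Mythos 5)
Your proof is correct and follows essentially the same route as the paper: both rest on summing the triangle inequality $\dist(s,t)\le\dist(s,v)+\dist(v,t)$ over all ordered pairs in $V(G)$ (with multiplicity $2(n-1)$ for each term $\dist(u,v)$) and combining it with the decomposition of the sum defining $L(G')$ into pairs avoiding $v$ and pairs containing $v$. Your added remarks on the equality case (the star) and on the necessity of the ambient-distance convention are consistent with, and slightly sharpen, the paper's closing note.
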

\begin{proof}
    Let's define the deleted vertex by $v$. By the triangle inequality $\forall s, t\in V(G): \dist(s,v)+\dist(v,t)\geq \dist(s,t)$, where the equality holds if, there are no paths from $s$ to $t$ in $G$. Therefore,
    $$\sumt_{s,t\in V(G), s\neq t}\big(\dist(s,v)+\dist(v,t)\big)\geq \sumt_{s,t\in V(G), s\neq t}\dist(s,t)$$
    $$\frac {2(n-1)} {n(n-1)}\sumt_{t\in V(G)}\dist(v,t)\geq\frac 1 {n(n-1)} \sumt_{s,t\in V(G), s\neq t}\dist(s,t)$$
    $$\frac {2} {n}\sumt_{t\in V(G)}\dist(v,t)\geq L(G)$$
    Then,
    $$L(G') = \frac 1 {(n+1)n} \sumt_{s,t\in V(G'),s\neq t}\dist(s,t) = \frac 1 {(n+1)n} \Big(2 \sumt_{t\in V(G)}\dist(v,t)+\sumt_{s,t\in V(G),s\neq t}\dist(s,t)\Big) =$$
    $$=\frac 1 {n+1} \frac 2 {n} \sumt_{t\in V(G)}\dist(v,t) + \frac {n-1} {n+1} L(G)\geq \frac n {n+1} L(G)$$
    Note that if $G$ consists of $n$ isolated vertices then the equality holds.
\end{proof}

We obtain corollaries

\begin{cor}
     Let a graph $G$ is obtained from a connected simple graph $G' \subset H$ by deleting one vertex, $|V(G)| = n$ and $H$ is connected and simple. Then
    $$L(G')\geq \frac n {n+1} L(G),$$
    where the average shortest path lengths $L(G)$ and $L(G')$ are defined in the ambient graph $H$, if corresponded graphs are not connected.
\end{cor}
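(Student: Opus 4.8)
The plan is to mimic, essentially line for line, the proof of the Theorem, reading the distance function throughout as $\dist_H$, the graph distance in $H$. The one new point that needs justification is that this distance still satisfies the triangle inequality on the vertices that appear in the computation. Since $H$ is connected and simple, $\dist_H$ is a genuine metric on $V(H)$, and from $G'\subset H$ we get $V(G')\subseteq V(H)$; writing $v$ for the vertex deleted from $G'$ to obtain $G$, so that $V(G')=V(G)\cup\{v\}$ and $|V(G')|=n+1$, we therefore have $\dist_H(s,v)+\dist_H(v,t)\ge \dist_H(s,t)$ for every $s,t\in V(G)$. This is exactly the inequality the Theorem's proof rests on, so no structural change is needed.

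Granting this, I would carry out the same two-step averaging. First, summing the triangle inequality over all ordered pairs $(s,t)$ with $s\neq t$ in $V(G)$ and dividing by $n(n-1)$ gives $\frac 2 n \sumt_{t\in V(G)}\dist_H(v,t)\ge L(G)$, with $L(G)$ evaluated using $\dist_H$. Second, I would split the defining sum for $L(G')$ into the $n(n-1)$ ordered pairs lying inside $V(G)$ and the $2n$ ordered pairs involving $v$:
$$L(G') = \frac 1 {(n+1)n}\Bigl(2\sumt_{t\in V(G)}\dist_H(v,t) + \sumt_{s,t\in V(G),\, s\neq t}\dist_H(s,t)\Bigr) = $$
$$= \frac 1 {n+1}\cdot\frac 2 n \sumt_{t\in V(G)}\dist_H(v,t) + \frac {n-1} {n+1} L(G) \ge \frac 1 {n+1} L(G)+ \frac {n-1} {n+1} L(G) = \frac n {n+1} L(G),$$
which is the claim.

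I do not expect a genuine obstacle here: the corollary is a direct transcription of the Theorem, the content being simply that connectedness of the \emph{host} graph $H$ — rather than of $G'$ itself — is already enough to license the triangle inequality through the deleted vertex. The only thing requiring a little care is the bookkeeping of which ambient graph each average shortest path length lives in. Consistent with the paper's framing I would read both $L(G)$ and $L(G')$ as computed with $\dist_H$; and since $G'\subset H$ forces $\dist_{G'}(s,t)\ge\dist_H(s,t)$ for all $s,t\in V(G')$, evaluating $L(G')$ in $G'$ instead only enlarges the left-hand side, so the inequality is preserved a fortiori in that reading as well.
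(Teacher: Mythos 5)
Your proposal is correct and matches the paper, which simply notes that the proof is the same as for the preceding theorem: you rerun the triangle-inequality argument through the deleted vertex $v$ with $\dist_H$ in place of $\dist$, which is exactly the intended argument. Your closing remark that $\dist_{G'}\geq \dist_H$ makes the inequality hold a fortiori if $L(G')$ is read in $G'$ itself is a small correct addition, not a different route.
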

\begin{proof}
The proof is the same as for the previous theorem and also if $G$ consists of $n$ isolated vertices then the equality holds.
\end{proof}

\begin{cor}\label{cor1}
    $$L\big(N'(i)\big) \geq  \frac {d_i} {d_i+1}L\big(N(i)\big).$$
\end{cor}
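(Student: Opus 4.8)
The plan is to deduce this corollary as an immediate consequence of the results already established, most directly by combining Lemma~2 and Lemma~1. Lemma~2 gives the exact identity
$$L\big(N'(i)\big) = \frac{(d_i-1)L\big(N(i)\big) + 2}{d_i+1},$$
so the claimed inequality $L\big(N'(i)\big) \geq \frac{d_i}{d_i+1} L\big(N(i)\big)$ is equivalent, after multiplying through by $d_i+1 > 0$, to
$$(d_i-1)L\big(N(i)\big) + 2 \geq d_i\, L\big(N(i)\big),$$
i.e. simply to $L\big(N(i)\big) \leq 2$. First I would record this algebraic reduction.

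Next I would invoke Lemma~1, which states $L\big(N(i)\big) = 2 - c_i$. Since the local clustering coefficient $c_i$ is non-negative (as noted just after the list of centrality definitions), we get $L\big(N(i)\big) = 2 - c_i \leq 2$, which is exactly what the reduction requires. Tracing the equality case, the inequality is tight precisely when $c_i = 0$, i.e. when $N(i)$ contains no edges.

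Alternatively, and with essentially the same content, one could obtain the statement directly from the Theorem (or from the preceding Corollary) applied with the smaller graph taken to be $N(i)$, so that $n = d_i$, and the larger graph taken to be $N'(i) = N(i) \cup \{i\}$, which is obtained from it by adding back the vertex $i$; here the ambient graph in which distances are measured is the original $G$. The only hypothesis to check for this route is that $N'(i)$ is connected and simple: it is simple because $G$ is, and it is connected because every vertex of $N(i)$ is by definition adjacent to $i$, so $N'(i)$ has diameter at most $2$. I do not anticipate any real obstacle in either approach; the one point deserving a sentence of care is making sure the reduction runs in the correct direction (that $L(N(i)) \le 2$, not $\ge$) and that the cited hypotheses are genuinely met.
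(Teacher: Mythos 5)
Your argument is correct, and in fact you give two valid proofs. The paper itself derives this corollary the second way you mention: it is stated immediately after the vertex-deletion theorem and its corollary (``We obtain corollaries''), i.e.\ one simply takes the smaller graph to be $N(i)$ with $n=d_i$ vertices, the larger graph $N'(i)$ obtained by adding back $i$, and measures all distances in the ambient graph $G$; your check that $N'(i)$ is connected (every vertex of $N(i)$ is adjacent to $i$) and simple is exactly what is needed there. Your primary route, via the exact identity of Lemma~2 together with $L\big(N(i)\big)=2-c_i\leq 2$ from Lemma~1, is a genuinely different and somewhat sharper argument: it avoids the general deletion theorem altogether, reduces the claim to the single inequality $L\big(N(i)\big)\leq 2$, and pins down the equality case precisely ($c_i=0$, i.e.\ $N(i)$ edgeless), whereas the deletion-theorem route only records equality when the smaller graph consists of isolated vertices --- which here amounts to the same condition, but less transparently. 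The only caveat, shared by both your routes and by the paper, is that $L\big(N(i)\big)$ and $c_i$ are only defined for $d_i\geq 2$ (the normalization $d_i(d_i-1)$ vanishes for pendant vertices), so the statement should implicitly be read with that restriction.
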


Let's proof the relation between shortest path length in a induced subgraph and shortest path length in ambient graph.

\begin{thm}
    Let's $G'\subset G$ be induced subgraph and $|V(G)| = n, |V(G')| = n+k$. Then
    $$L(G')\geq \frac n {n+k} L(G),$$
    where the average shortest path length $L(G)$ is defined in the ambient graph $G'$, if $G$ is not connected.
\end{thm}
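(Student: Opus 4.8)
The plan is to re-run the proof of the one--vertex deletion theorem, but to delete the whole set $W:=V(G')\setminus V(G)$ in a single step rather than one vertex at a time. (I read the hypothesis as: $G$ is an induced subgraph of $G'$ with $|V(G)|=n$, $|V(G')|=n+k$, so $|W|=k$ and the inequality is oriented as before; the printed ``$G'\subset G$'' should read $G\subset G'$.) As in the previous theorem, $G'$ is taken to be connected, every $\dist$ below is the distance in $G'$, and $L(G)$ means the average over ordered pairs of $V(G)$ of these ambient distances whenever $G$ is itself disconnected.

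First I would isolate the one--vertex estimate in exactly the form it gets used. For \emph{any} vertex $w$ of $G'$, applying the triangle inequality $\dist(s,w)+\dist(w,t)\ge\dist(s,t)$ to every ordered pair $s\neq t$ of $V(G)$ and summing gives $2(n-1)\sum_{t\in V(G)}\dist(w,t)\ge n(n-1)L(G)$, hence
$$\sumt_{t\in V(G)}\dist(w,t)\ \geq\ \frac n2\,L(G).$$
This is the identical computation already carried out in the proof of the one--vertex theorem.

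Next I would expand the double sum defining $L(G')$ along the partition $V(G')=V(G)\sqcup W$ into three blocks --- pairs inside $V(G)$, mixed pairs, and pairs inside $W$ --- so that
$$(n+k)(n+k-1)\,L(G')=n(n-1)L(G)+2\sumt_{s\in V(G),\,w\in W}\dist(s,w)+\sumt_{w,w'\in W,\,w\neq w'}\dist(w,w').$$
The last block is non--negative, and summing the one--vertex estimate over the $k$ vertices of $W$ gives $2\sum_{s\in V(G),\,w\in W}\dist(s,w)\ge nk\,L(G)$. Substituting, $(n+k)(n+k-1)L(G')\ge n(n-1)L(G)+nk\,L(G)=n(n+k-1)L(G)$, and dividing by $(n+k-1)$ yields $L(G')\ge\frac{n}{n+k}L(G)$. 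As in the previous theorem, the bound is attained in the degenerate case where all the triangle inequalities are equalities, e.g.\ when $G$ is a set of $n$ isolated vertices inside $G'$.

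The one point that needs care --- and the reason I would perform the deletion of all of $W$ at once --- is that nothing forces $G$, or $G'$ with a single vertex removed, to be connected; an induction on $k$ through the preceding corollary would therefore run into intermediate graphs whose own internal distances are undefined. Measuring every distance in the fixed connected graph $G'$ sidesteps this, so the main ``obstacle'' is really just fixing the ambient--distance convention correctly and keeping the three--block bookkeeping straight; no inequality beyond the triangle inequality already used is needed.
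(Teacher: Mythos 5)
Your proof is correct, but it takes a genuinely different route from the paper's. The paper deduces the general case from the one-vertex corollary by building $G'$ from $G$ one added vertex at a time and telescoping the factors $\frac{n+i}{n+i+1}$; you instead remove the whole set $W=V(G')\setminus V(G)$ in a single step, splitting the ordered-pair sum defining $L(G')$ into the three blocks $V(G)\times V(G)$, mixed, and $W\times W$, bounding the mixed block by summing the triangle-inequality estimate $\sum_{t\in V(G)}\dist(w,t)\geq \frac n2 L(G)$ over the $k$ vertices $w\in W$, and simply discarding the nonnegative $W$-internal block; the arithmetic $(n+k)(n+k-1)L(G')\geq n(n+k-1)L(G)$ then gives the claim. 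Both arguments ultimately rest on the same triangle-inequality computation (and you correctly read the hypothesis as $G\subset G'$ despite the typo), but your one-shot version buys something concrete: the paper's chain passes through intermediate graphs that need not be connected, so it has to lean on the corollary's ambient-distance convention and on a particular ordering of the vertex additions, whereas your bookkeeping lives entirely in the fixed connected graph $G'$ and needs no induction. One small caveat: your closing remark that the bound is attained when $G$ consists of isolated vertices is accurate only for $k=1$; for $k\geq 2$ the discarded block $\sum_{w\neq w'}\dist(w,w')$ is strictly positive, so the inequality is strict. This does not affect the theorem itself, which asserts only the inequality.
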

\begin{proof}
    Let's construct the graph $G'$ from $G$ by adding sequentially $k$ vertices and corresponded edges. Let's first sequentially add vertices adjacent to vertices of the graph $G$. Adding these vertices one by one we obtain a sequence of graphs $G_1, G_2, ... , G_p, $ where $|V(G_i)| = n+i$. Further, let's add is the same way vertices adjacent to vertices of the graph $G_p$ and so on. In the end we obtain the graph $G'$. By previous corollary
    $$\hspace{-15pt}L(G')\geq \frac {n+k-1} {n+k} L(G_{k-1})\geq \frac {n+k-1} {n+k} \frac {n-k-2} {n-k-1} L(G_{k-2}) = \frac {n-k-2} {n+k} L(G_{k-2})\geq \cdots \geq \frac n {n+k} L(G).$$
\end{proof}

Let's proof the relation between average closeness centrality and average shortest path length:

\begin{lm}
$$
L(G)\geq \frac n {\sumt_{v\in V(G)} \Clo(v)}.
$$
\end{lm}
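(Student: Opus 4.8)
The claim is $L(G) \geq \dfrac{n}{\sum_{v} \Clo(v)}$. The plan is to rewrite both sides in terms of the quantities $D(v) := \sum_{t \in V(G)} \dist(v,t)$, and then invoke a standard mean inequality. First I would observe that $\Clo(v) = \dfrac{n-1}{D(v)}$ by definition, so the right-hand side is
\[
\frac{n}{\sum_{v} \Clo(v)} = \frac{n}{(n-1)\sum_{v} 1/D(v)} = \frac{1}{n-1}\cdot \frac{n}{\sum_{v} 1/D(v)},
\]
which is (up to the factor $\frac{1}{n-1}$) exactly the harmonic mean of the numbers $D(v)$, $v \in V(G)$. On the other side, $L(G) = \dfrac{1}{n(n-1)}\sum_{s \neq t}\dist(s,t) = \dfrac{1}{n(n-1)}\sum_{v} D(v)$, which is $\dfrac{1}{n-1}$ times the arithmetic mean of the $D(v)$.

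So after these rewrites the inequality reduces to: the arithmetic mean of the positive numbers $\{D(v)\}_{v \in V(G)}$ is at least their harmonic mean, i.e. the AM–HM inequality
\[
\frac{1}{n}\sum_{v} D(v) \;\geq\; \frac{n}{\sum_{v} 1/D(v)}.
\]
This is immediate from Cauchy–Schwarz applied to the vectors $(\sqrt{D(v)})_v$ and $(1/\sqrt{D(v)})_v$, giving $\big(\sum_v 1\big)^2 \leq \big(\sum_v D(v)\big)\big(\sum_v 1/D(v)\big)$. One should note $D(v) > 0$ for every $v$ since $G$ is connected with $n \geq 2$, so all divisions are legitimate and the $\Clo(v)$ are well defined.

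The only mild subtlety — and the one place to be careful rather than a genuine obstacle — is bookkeeping the factor $n-1$ consistently through both rewrites, since $\Clo(v)$ carries an $n-1$ in its numerator while $L(G)$ carries an $n(n-1)$ in its denominator; these must cancel correctly for the reduction to land exactly on AM–HM. It is also worth remarking on the equality case: equality in AM–HM holds iff all $D(v)$ are equal, i.e. iff $G$ is distance-regular in the sense that every vertex has the same total distance to all others (for instance vertex-transitive graphs), in which case $L(G) = n/\sum_v \Clo(v)$ exactly.
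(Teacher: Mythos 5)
Your proof is correct and follows essentially the same route as the paper: both reduce the claim to the AM--HM inequality applied to the total distances $D(v)=\sumt_{t\in V(G)}\dist(v,t)$ (the paper phrases it as the average of the $\Clo(v)$ being at least $1/L(G)$, which is the same rearrangement). One small caution: calling the equality case ``distance-regular'' is nonstandard terminology --- the condition is simply that all $D(v)$ coincide (sometimes called transmission-regular), exactly as the paper remarks.
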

\begin{proof}
By the inequality of harmonic mean and arithmetic mean
$$
\frac 1 {n} \sumt_{v\in V(G)} Clo(v) = \frac 1 n \sumt_{v\in V(G)} \frac {n-1} {\sumt_{t\in V(G)} dist(v, t)}\geq \frac {n(n-1)} {\sumt_{v, t\in V(G)} dist(v, t)} = \frac 1 {L(G)}.
$$
Note that an equality holds when all average shortest path lengths from any vertex to all remaining vertices are equal.
\end{proof}

Let's proof the relation between average shortest path length and average radiality:

\begin{lm}\label{lm3}
$$
 L(G) = \diam(G)+1 -\frac 1 {n} \sumt_{v\in V(G)} {\Rad}(v).
$$
\end{lm}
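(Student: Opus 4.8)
The plan is to unwind the definition of radiality, average it over all vertices, and recognize the resulting double sum of distances as a multiple of $L(G)$. First I would write out
$$\frac{1}{n}\sumt_{v\in V(G)}\Rad(v) = \frac{1}{n(n-1)}\sumt_{v\in V(G)}\;\sumt_{t\in V(G),\,t\neq v}\big(\diam(G)+1-\dist(v,t)\big),$$
simply by plugging in the formula $\Rad(v) = \frac{1}{n-1}\sumt_{t\neq v}(\diam(G)+1-\dist(v,t))$ and dividing by $n$.

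Next I would observe that the double sum ranges over exactly the $n(n-1)$ ordered pairs $(v,t)$ of distinct vertices, so it separates into a constant contribution and a distance contribution:
$$\sumt_{v\in V(G)}\;\sumt_{t\neq v}\big(\diam(G)+1\big) = n(n-1)\big(\diam(G)+1\big), \qquad \sumt_{v\in V(G)}\;\sumt_{t\neq v}\dist(v,t) = n(n-1)\,L(G),$$
the second equality being just the definition of $L(G)$ rearranged. Substituting these back, the factor $n(n-1)$ cancels throughout and I am left with $\frac{1}{n}\sumt_{v}\Rad(v) = \diam(G)+1 - L(G)$, which is the claimed identity after transposing $L(G)$.

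Unlike the closeness lemma, no inequality between means is involved here — radiality is an affine function of the distances rather than a harmonic-type quantity — so this is a pure identity obtained by rearranging finite sums. The only point requiring (very mild) care is matching the index sets: summing the range $t\neq v$ of $\Rad(v)$ over all $v$ produces precisely the ordered-pair index set of $L(G)$, with no diagonal terms and no double counting, so the normalizing constants $n$ and $n-1$ line up exactly. I do not anticipate any real obstacle.
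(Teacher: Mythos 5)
Your proof is correct and is essentially identical to the paper's: both simply expand the definition of $\Rad(v)$, average over $v$, split off the constant term $n(n-1)(\diam(G)+1)$, and identify the remaining double distance sum with $n(n-1)L(G)$. No gaps; the observation that this is a pure identity (unlike the closeness lemma) matches the paper's one-line computation.
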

\begin{proof}
The proof holds from definition
$$
\frac 1 {n} \sumt_{v\in V(G)} {Rad}(v) = \frac 1 {n} \sumt_{v\in V(G)} \frac {(n-1) (diam(G)+1)- \sumt_{t\in V(G),\; t\neq v} dist(v,t))} {n-1} = diam(G)+1 -L(G).
$$
\end{proof}

Now let's prove theorem about a relation between average clustering coefficient and radiality using previous theorem.
\begin{thm}
$$
\frac {C_{WS}(G)} 2 \geq \frac 1 n \sumt_{i\in V(G)} \frac {\sumt_{v\in N(i)} \overline{\Rad}(v)} {d_i} -2 + \frac {\#\{N(i) \text{ which are complete graphs}\}} {n}.
$$
\end{thm}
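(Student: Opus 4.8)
The plan is to expand the restricted radialities $\overline{\Rad}(v)$ for $v\in N(i)$ straight from the definition, collapse everything onto $L\big(N(i)\big)$ and $c_i$ by Lemma~\ref{lm1}, and finish with a vertex‑by‑vertex estimate using $0\le c_i\le 1$.

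First I would record the structure of $N'(i)$: it has $d_i+1$ vertices and the vertex $i$ is adjacent to all of them, so inside $N'(i)$ every pair of distinct vertices is at distance $1$ (if joined by an edge of $G$) or $2$ (otherwise); in particular the ambient distance $\dist$ restricted to $N'(i)$ is the genuine distance of the subgraph $N'(i)$, and $\overline{\diam}\big(N'(i)\big)$ equals $1$ exactly when $N(i)$ is a complete graph and $2$ otherwise. Writing $D_i=\overline{\diam}\big(N'(i)\big)$ and splitting the inner sum into the term $t=i$ and the terms $t\in N(i)$,
$$\sum_{v\in N(i)}\overline{\Rad}(v)=\frac1{d_i}\sum_{v\in N(i)}\sum_{t\in V(N'(i)),\,t\ne v}\big(D_i+1-\dist(v,t)\big)=\frac1{d_i}\Big(d_i^2(D_i+1)-d_i-\sum_{s,t\in N(i),\,s\ne t}\dist(s,t)\Big).$$
The last sum is $d_i(d_i-1)L\big(N(i)\big)$ by the definition of $L\big(N(i)\big)$, so dividing by $d_i$ and substituting $L\big(N(i)\big)=2-c_i$ from Lemma~\ref{lm1}, a routine simplification gives
$$\frac{\sum_{v\in N(i)}\overline{\Rad}(v)}{d_i}=(D_i-1)+\frac1{d_i}+\frac{(d_i-1)c_i}{d_i}.$$
The degenerate case $d_i=1$ is checked directly (both sides equal $1$) and causes no trouble below.

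Next I would sum this identity over $i\in V(G)$ and divide by $n$. Since $D_i-1$ is $0$ exactly when $N(i)$ is complete and $1$ otherwise, $\sum_i(D_i-1)=n-\#\{N(i)\text{ which are complete graphs}\}$, so the right‑hand side of the claimed inequality becomes, after the complete‑graph counts cancel,
$$\frac1n\sum_{i\in V(G)}\frac{\sum_{v\in N(i)}\overline{\Rad}(v)}{d_i}-2+\frac{\#\{N(i)\text{ complete}\}}{n}=-1+\frac1n\sum_{i\in V(G)}\frac1{d_i}+\frac1n\sum_{i\in V(G)}\frac{(d_i-1)c_i}{d_i}.$$
Using $C_{WS}(G)=\frac1n\sum_i c_i$, multiplying by $n$ and rearranging, the inequality to prove is the sum over $i$ of the pointwise inequalities $\frac{d_i-1}{d_i}\ge c_i\,\frac{d_i-2}{2d_i}$. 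For $d_i\ge2$ this holds because $0\le c_i\le1$ together with $\frac{d_i-2}{2d_i}\le\frac{d_i-1}{d_i}$ (equivalent to $d_i\ge0$); for $d_i=1$ the left side is $0$ and the right side is nonpositive.

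The only genuinely delicate step is the first one: one must keep straight that inside $\overline{\Rad}(v)$ the quantities $n$, $\diam$ and $\dist$ are all taken within $N'(i)$ (so the denominator is $d_i$ and the diameter is $1$ or $2$), that the outer summation runs over $N(i)$ and not $N'(i)$, and — crucially — that $\overline{\diam}\big(N'(i)\big)$ is $1$ precisely when $N(i)$ is complete, which is exactly what produces the complete‑graph term in the statement. Everything after that identity is elementary algebra plus the bound $0\le c_i\le1$ noted just after the definitions.
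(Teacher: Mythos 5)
Your proof is correct, and it takes a genuinely different route from the paper's. The paper applies the radiality identity of Lemma~\ref{lm3} to the subgraph $N'(i)$ so as to rewrite the averaged restricted radiality as $\diam\big(N'(i)\big)+1-L\big(N'(i)\big)$, then lower-bounds $L\big(N'(i)\big)\geq \frac{d_i}{d_i+1}L\big(N(i)\big)\geq\frac{1}{2} L\big(N(i)\big)=1-\frac{c_i}{2}$ via Corollary~\ref{cor1} and Lemma~\ref{lm1}, and finishes with $\diam\big(N'(i)\big)=2-\chi_{K_{d_i}}\big(N(i)\big)$ and averaging. You bypass Corollary~\ref{cor1} and Lemma~\ref{lm3} altogether: you evaluate $\frac{1}{d_i}\sum_{v\in N(i)}\overline{\Rad}(v)$ exactly from the definition, obtaining $(D_i-1)+\frac{1}{d_i}+\frac{(d_i-1)c_i}{d_i}$ with $D_i=\diam\big(N'(i)\big)$ (I checked this identity and the subsequent averaging; the complete-graph counts do cancel as you claim), after which the theorem reduces to the pointwise bound $\frac{d_i-1}{d_i}\geq c_i\,\frac{d_i-2}{2d_i}$, needing only $0\leq c_i\leq 1$. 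The paper's argument buys brevity by reusing machinery it has already established (the vertex-deletion theorem and the radiality identity); yours buys exactness and transparency: the only inequality occurs at the last step, so the slack and the equality cases are visible, the $d_i=1$ degeneracy is treated explicitly, and you sidestep a small normalization slip in the paper's proof, where $\frac{1}{d_i}\sum_{v\in N(i)}\overline{\Rad}(v)$ is written as equal to $\diam\big(N'(i)\big)+1-L\big(N'(i)\big)$ even though the sum omits $v=i$ and is divided by $d_i$ rather than $d_i+1$ (that step is really only a ``$\leq$'', which still suffices there, whereas your identity is exact). The one convention you rely on --- treating $c_i$, and hence the term $\frac{(d_i-1)c_i}{d_i}$, as $0$ for pendant vertices --- is the same convention the paper implicitly needs, so it is not a gap.
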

\begin{proof}
By lemmas~\ref{cor1}: $L(N'(i))\geq \frac {d_i} {d_i+1}L(N(i))\geq\frac 1 2 L(N(i)) = 1-\frac 1 2 c_i$. Therefore,
$$
\frac 1 {d_i} \sumt_{v\in N(i)} \overline{\Rad}(v) =  \diam(N'(i))+1-L(N'(i)) \leq \diam (N'(i))+\frac 1 2 c_i = \frac 1 2 c_i+2-\chi_{K_{d_i}}(N(i)),
$$
where 
$
\chi_{K_{d_i}}(N(i)) =
\begin{cases}
	1 & \text{if $N(i) = K_{d_i}$} \\
	0 & \text{otherwise}
\end{cases}
$. Averaging by $i$ ends the proof.
\end{proof}

Let's prove a theorem about a connection between the average stress centrality and average shortest path length for geodetic graphs.

\begin{thm}\label{thm1}
    If $G$ is geodetic, then
    $$
        L(G) = 1+\frac {1} {n(n-1)} \sumt_{i\in V(G)}\Str(i).
    $$
\end{thm}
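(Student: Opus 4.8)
The plan is to exploit the defining feature of geodetic graphs: between any two vertices there is exactly one shortest path, so for every pair $s\neq t$ and every vertex $i$ the quantity $\sigma_{st}(i)$ is either $0$ or $1$, and it equals $1$ precisely when $i$ lies on the unique $s$--$t$ geodesic. Because the summation defining $\Str(i)$ ranges over $s\neq t\neq i$, the endpoints $s$ and $t$ are excluded, so $\Str(i)$ counts exactly the ordered pairs $(s,t)$ for which $i$ is a \emph{strictly internal} vertex of the $s$--$t$ geodesic.

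Next I would interchange the order of summation in $\sum_{i\in V(G)}\Str(i)$, grouping by the pair $(s,t)$ instead of by $i$. For a fixed ordered pair $s\neq t$, the inner sum $\sum_{i\neq s,\, i\neq t}\sigma_{st}(i)$ counts the number of internal vertices on the unique $s$--$t$ geodesic. A geodesic realizing $\dist(s,t)$ has $\dist(s,t)+1$ vertices in total, hence exactly $\dist(s,t)-1$ internal ones. Therefore
$$
\sumt_{i\in V(G)}\Str(i) \;=\; \sumt_{s\neq t}\bigl(\dist(s,t)-1\bigr) \;=\; \sumt_{s\neq t}\dist(s,t) \;-\; n(n-1).
$$

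Finally, dividing by $n(n-1)$ and recalling $L(G)=\frac{1}{n(n-1)}\sum_{s\neq t}\dist(s,t)$ yields
$$
\frac{1}{n(n-1)}\sumt_{i\in V(G)}\Str(i) \;=\; L(G)-1,
$$
which rearranges to the claimed identity. I do not expect any serious obstacle here; the argument is a clean double-counting. The only point requiring care is the endpoint bookkeeping — making sure the constraint $s\neq t\neq i$ is what turns ``vertices on the geodesic'' into ``internal vertices,'' giving the count $\dist(s,t)-1$ rather than $\dist(s,t)+1$ — and checking that the ordered-pair convention is consistent with the $\frac{1}{n(n-1)}$ normalization in the definition of $L(G)$ (it is, since both sums range over ordered pairs $s\neq t$). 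The geodetic hypothesis is essential and is used exactly once, to guarantee $\sigma_{st}(i)\in\{0,1\}$; without it the left-hand side would overcount pairs having multiple geodesics through $i$.
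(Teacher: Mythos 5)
Your proposal is correct and follows essentially the same route as the paper: both use the geodetic hypothesis to identify $\sum_{i}\sigma_{st}(i)$ (the paper's $\chi_{st}(i)$) with the $\dist(s,t)-1$ internal vertices of the unique geodesic, then double-count $\sum_i \Str(i)$ over ordered pairs to get $\sum_{s\neq t}\dist(s,t)-n(n-1)$. The paper merely splits off the adjacent pairs (contributing $2|E|$) before substituting, which your uniform bookkeeping handles in one step.
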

\begin{proof}
    Let's define 
    $$\chi_{st}(i) = \begin{cases}
        1 & \text{if $i\neq s\neq t$ is the vertex of the shortest path between $s$ and $t$,}\\
        0 & \text{otherwise.}
    \end{cases}$$
    
    Thus, $\Str(i) = \sumt_{s,t\in V(G)} \chi_{st}(i)$. If there exists the unique shortest path between any two vertices in $G$, then $\dist(s,t) = \sumt_{i\in V(G)}  \chi_{st}(i)+1$ (otherwise, $\dist(s,t) \leq \sumt_{i\in V(G)}  \chi_{st}(i)+1$).  Therefore, for any $i$
    $$
        \sumt_{s,t\in V(G)} \dist(s,t) = 2|E|+\sumt_{s,t\in V(G),\,\dist(s,t)\geq 2} \dist(s,t) = 2|E|+\sumt_{s,t\in V(G),\,\dist(s,t)\geq 2} \Big(\sumt_{i\in V(G)} \chi_{st}(i)+1\Big) =
    $$
    $$
        = 2|E|+\sumt_{i\in V(G)}\Str(i)+n(n-1)-2|E|=\sumt_{i\in V(G)}\Str(i)+n(n-1).
    $$
\end{proof}

\begin{cor}
    For any simple connected $G$
    $$
        L(G) \leq 1+\frac {1} {n(n-1)} \sumt_{i\in V(G)}\Str(i).
    $$
\end{cor}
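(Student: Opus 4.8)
The plan is to reuse, almost verbatim, the argument already given in the proof of Theorem~\ref{thm1}, keeping track of the one place where geodeticity was used and replacing the equality there by the inequality that holds unconditionally. Recall the indicator $\chi_{st}(i)$ introduced in that proof: $\chi_{st}(i)=1$ exactly when $i\notin\{s,t\}$ lies on \emph{some} shortest $s$--$t$ path, so that $\Str(i)=\sumt_{s,t\in V(G)}\chi_{st}(i)$.

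First I would establish the key inequality $\dist(s,t)\le \sumt_{i\in V(G)}\chi_{st}(i)+1$ for every pair $s\ne t$. To see this, fix one shortest path $P$ from $s$ to $t$; it has exactly $\dist(s,t)-1$ internal vertices, and each of them lies on a shortest $s$--$t$ path (namely $P$ itself), hence contributes $1$ to the sum $\sumt_i\chi_{st}(i)$. Therefore $\sumt_i\chi_{st}(i)\ge \dist(s,t)-1$, which is the claimed bound. (When $G$ is geodetic this is an equality because the unique geodesic accounts for \emph{all} vertices counted by $\chi_{st}$; in general the sum may be strictly larger.)

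Next I would carry out the same summation as in Theorem~\ref{thm1}. Splitting the pairs according to whether $\dist(s,t)=1$ or $\dist(s,t)\ge 2$ (the case $\dist(s,t)=0$ is excluded), and noting that $\chi_{st}(i)=0$ whenever $\dist(s,t)\le 1$, one gets
$$
\sumt_{s,t\in V(G)}\dist(s,t)=2|E|+\sumt_{\substack{s,t\in V(G)\\ \dist(s,t)\ge 2}}\dist(s,t)\le 2|E|+\sumt_{\substack{s,t\in V(G)\\ \dist(s,t)\ge 2}}\Big(\sumt_{i\in V(G)}\chi_{st}(i)+1\Big).
$$
Since the number of ordered pairs at distance exactly $1$ is $2|E|$, the number of ordered pairs at distance $\ge 2$ is $n(n-1)-2|E|$, and $\sumt_{s,t:\,\dist(s,t)\ge 2}\sumt_i\chi_{st}(i)=\sumt_i\Str(i)$, the right-hand side collapses to $\sumt_{i\in V(G)}\Str(i)+n(n-1)$. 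Dividing by $n(n-1)$ yields $L(G)\le 1+\frac{1}{n(n-1)}\sumt_{i\in V(G)}\Str(i)$.

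There is essentially no serious obstacle here: the statement is a one-sided relaxation of Theorem~\ref{thm1}, and the parenthetical remark in that proof already flags the needed inequality. The only point requiring a sentence of care is the justification of $\dist(s,t)\le\sumt_i\chi_{st}(i)+1$ — one must commit to a single geodesic and count its internal vertices, rather than trying to count over all shortest paths at once — but this is exactly the step sketched above.
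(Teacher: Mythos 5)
Your argument is exactly the paper's intended (implicit) proof: the paper states this corollary without a separate proof, relying on the parenthetical remark in the proof of Theorem~\ref{thm1} that $\dist(s,t)\leq\sumt_{i\in V(G)}\chi_{st}(i)+1$ in general, and you have simply carried that inequality through the same summation, which is correct. One small inaccuracy: under the paper's definition $\Str(i)=\sumt_{s,t}\sigma_{st}(i)$, where $\sigma_{st}(i)$ counts shortest paths with multiplicity, the identity $\Str(i)=\sumt_{s,t}\chi_{st}(i)$ holds only for geodetic graphs; in general one has $\sigma_{st}(i)\geq\chi_{st}(i)$ and hence $\Str(i)\geq\sumt_{s,t}\chi_{st}(i)$, so your final ``collapse'' step should be an inequality rather than an equality --- but since it points in the right direction, the bound $L(G)\leq 1+\frac{1}{n(n-1)}\sumt_{i\in V(G)}\Str(i)$ still follows.
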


First let's proof theorem about a connection between average clustering coefficient and stress centrality and we will use it for a connection between the average shortest path length and the average local clustering coefficient further.

\begin{thm}\label{thm2}
If there is no pendant vertices in graph, then
$$
C_{WS}(G)\geq 1- \frac 1 n \sumt_{i\in V(G)} \frac {Str(i)} {d_i(d_i-1)}.
$$
\end{thm}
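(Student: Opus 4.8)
The plan is to relate the local stress centrality $\overline{\Str}(i)$ — the stress centrality of the vertex $i$ inside the subgraph $N'(i)$ — to the local clustering coefficient $c_i$, and then to compare $\overline{\Str}(i)$ with the global quantity $\Str(i)$. For the first step I would apply Lemma~\ref{lm1} together with Theorem~\ref{thm1} (or rather the corollary to it) to the subgraph $N'(i)$. Indeed, by Lemma~1 we have $L(N(i)) = 2 - c_i$, and since $N'(i)$ has $d_i+1$ vertices, Lemma~2 gives $L(N'(i)) = \frac{(d_i-1)(2-c_i)+2}{d_i+1}$. On the other hand, every shortest path inside $N'(i)$ between two neighbours $s,t$ of $i$ has length $1$ or $2$, and a length-$2$ path can be routed through $i$; the only vertex that can ever be interior to a shortest path in $N'(i)$ is $i$ itself (a neighbour $s$ has every other vertex of $N'(i)$ at distance $\le 2$ from it via a path not through $s$ unless it is forced through $i$ — this needs a small argument). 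Thus $\overline{\Str}(i)$ counts, up to ordered pairs, the non-adjacent pairs of neighbours of $i$, so $\overline{\Str}(i) = d_i(d_i-1) - 2|E(N(i))| = d_i(d_i-1)(1-c_i)$, i.e. $c_i = 1 - \frac{\overline{\Str}(i)}{d_i(d_i-1)}$.

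The second step is the comparison $\overline{\Str}(i) \le \Str(i)$. A shortest path in $N'(i)$ through $i$ joining two non-adjacent neighbours $s,t$ has length $2$; I must check that $s t i$ is also a shortest path in the whole graph $G$, i.e. that $\dist_G(s,t) = 2$. Since $s$ and $t$ are both adjacent to $i$ in $G$, certainly $\dist_G(s,t)\le 2$, and since $(s,t)\notin E(G)$ we get $\dist_G(s,t)=2$ exactly. Hence each ordered non-adjacent pair $(s,t)$ of neighbours of $i$ contributes at least $1$ to $\Str(i)$ as well (the path $sti$ is one of the shortest $s$–$t$ paths through $i$ in $G$), so $\Str(i)\ge \overline{\Str}(i)$. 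This is where the hypothesis that $G$ has no pendant vertices enters: it guarantees $d_i\ge 2$ for all $i$, so that $c_i$ and the fractions $\frac{\Str(i)}{d_i(d_i-1)}$ are well defined (no division by zero), and it makes the averaging over all $i\in V(G)$ legitimate.

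Combining the two steps, for every vertex $i$ we have
$$
c_i = 1 - \frac{\overline{\Str}(i)}{d_i(d_i-1)} \ge 1 - \frac{\Str(i)}{d_i(d_i-1)},
$$
and averaging over $i\in V(G)$ yields $C_{WS}(G) \ge 1 - \frac1n \sum_{i\in V(G)} \frac{\Str(i)}{d_i(d_i-1)}$, as claimed.

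I expect the main obstacle to be the clean identification $\overline{\Str}(i) = d_i(d_i-1)(1-c_i)$: one has to rule out that some vertex $v\neq i$ of $N'(i)$ lies on a shortest path in $N'(i)$, and one has to be careful that $N'(i)$ need not be geodetic, so that a non-adjacent pair $(s,t)$ might have several shortest paths in $N'(i)$, only some of which pass through $i$ — but since all such paths have length $2$ and any length-$2$ path through $i$ is among them, the count of ordered pairs contributing to $\overline{\Str}(i)$ is still exactly the number of ordered non-adjacent neighbour pairs. Everything else is bookkeeping with the already-proved Lemmas 1 and 2 and the stress/distance identity from the proof of Theorem~\ref{thm1}.
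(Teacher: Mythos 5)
Your proposal is correct and rests on exactly the same key observation as the paper: for any two non-adjacent neighbours $s,t$ of $i$ one has $\dist_G(s,t)=2$ and $s\to i\to t$ is a shortest path through $i$, so $\Str(i)\ge d_i(d_i-1)-2|E(N(i))| = d_i(d_i-1)(1-c_i)$, and averaging over $i$ (with $d_i\ge 2$ guaranteed by the no-pendant-vertex hypothesis) finishes the proof. The detour through the local stress $\overline{\Str}(i)$ in $N'(i)$ and Lemmas~\ref{lm1}--2 is harmless but superfluous, since the paper establishes the per-vertex inequality $c_i \ge 1-\frac{\Str(i)}{d_i(d_i-1)}$ directly by this counting.
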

\begin{proof}
Note that $\forall j,k\in N(i): (j,k)\notin E(N(i))$ the shortest path between $j$ and $k$ is $j\rightarrow i\rightarrow k$. Therefore,
$$
Str(i)\geq 2( \frac {d_i(d_i-1)} 2-|E(N(i))|),
$$
$$
\frac 1 {d_i(d_i-1)} Str(i)\geq 1-c_i,
$$
Averaging by $i$
$$
C_{WS}(G) \geq \frac 1 n \sumt_{i\in V(G)} (1- \frac {Str(i)} {d_i(d_i-1)}).
$$
Note that for $diam(G) = 2$ holds an equality.
\end{proof}

\begin{cor}
    In the general case
    $$
    C_{WS}(G)\geq 1- \frac 1 n \sumt_{i\in V(G)} \frac {Str(i)} {d_i(d_i-1)}-\frac {\text{number of pendant vertices}} n.
    $$
\end{cor}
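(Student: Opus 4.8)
The final statement is the Corollary asserting that for a general simple connected graph,
\[
C_{WS}(G)\geq 1- \frac 1 n \sumt_{i\in V(G)} \frac {Str(i)} {d_i(d_i-1)}-\frac {\text{number of pendant vertices}} n.
\]

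The plan is to reuse the argument from Theorem~\ref{thm2} almost verbatim, isolating the one place where the hypothesis ``no pendant vertices'' was actually used. In the proof of Theorem~\ref{thm2}, the key inequality $\Str(i)\geq 2\big(\tfrac{d_i(d_i-1)}{2}-|E(N(i))|\big)$ comes from the observation that any pair $j,k\in N(i)$ that is non-adjacent contributes a shortest path $j\to i\to k$ passing through $i$, hence adds at least $1$ (in fact exactly $2$, counting ordered pairs, in the $\diam=2$ case) to $\Str(i)$. This reasoning is vacuous when $d_i\le 1$, because then $N(i)$ has at most one vertex, there are no pairs $j,k$, the denominator $d_i(d_i-1)$ is either $0$ or meaningless, and $c_i$ is conventionally taken to be $0$. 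So the inequality $\tfrac{1}{d_i(d_i-1)}\Str(i)\ge 1-c_i$ may fail precisely at pendant vertices (and isolated vertices, which cannot occur since $G$ is connected with $n\ge 2$).

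First I would split the vertex sum $\sum_{i\in V(G)}$ into the set $P$ of pendant vertices (degree $1$) and the rest $V\setminus P$. For $i\in V\setminus P$ we have $d_i\ge 2$, so the proof of Theorem~\ref{thm2} applies unchanged and gives $c_i\ge 1-\tfrac{\Str(i)}{d_i(d_i-1)}$. For $i\in P$, by the convention stated in the paper ($c_i$ defined via $2|E(N(i))|/(d_i(d_i-1))$, understood as $0$ when $d_i\le 1$) we simply have $c_i=0$, and we crudely bound $c_i\ge 1-\tfrac{\Str(i)}{d_i(d_i-1)}-1$, i.e. we pay a penalty of exactly $1$ for each pendant vertex. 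Summing over all $i$ and dividing by $n$ yields
\[
C_{WS}(G)=\frac1n\sum_{i\in V(G)}c_i\ \ge\ \frac1n\sum_{i\in V(G)}\Big(1-\frac{\Str(i)}{d_i(d_i-1)}\Big)-\frac{|P|}{n},
\]
which is the claimed bound, provided one adopts the convention that the term $\tfrac{\Str(i)}{d_i(d_i-1)}$ is read as $0$ for pendant vertices (consistent with how $c_i$ is read there).

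The only real subtlety — and the step I would be most careful about — is the bookkeeping of the degenerate denominator $d_i(d_i-1)=0$ at pendant vertices: one must state explicitly that for such $i$ both $c_i$ and the fraction $\tfrac{\Str(i)}{d_i(d_i-1)}$ are taken to be $0$ by convention, so that the ``$-1$'' penalty per pendant vertex is exactly what is lost and nothing more. No genuinely new idea is needed beyond Theorem~\ref{thm2}; it is a matter of restricting that theorem's inequality to non-pendant vertices and absorbing the pendant vertices into an explicit error term.
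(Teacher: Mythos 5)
Your proposal is correct and follows essentially the same route as the paper: apply the inequality of Theorem~\ref{thm2} to all non-pendant vertices and, adopting the convention that both $c_i$ and $\Str(i)/\big(d_i(d_i-1)\big)$ are read as $0$ at pendant vertices, absorb the failure of that inequality there by a penalty of $1$ per pendant vertex. Your explicit bookkeeping of the degenerate denominator is just a more careful statement of exactly what the paper's proof does.
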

\begin{proof}
    Let's for pendant vertex define $\frac {Str(i)} {d_i(d_i-1)}$ as 0, then in the inequality $\frac 1 {d_i(d_i-1)} Str(i)\geq 1-c_i$ in the right side will be 1 and in the left side 0, thus if we add in the left side 1 for every pendant vertex, will be right equality. 
\end{proof}

Now let's prove a theorem about a connection between the average shortest path length and the average local clustering coefficient for geodetic graphs.
\begin{thm}\label{thm6}
    If $G$ is geodetic and $\forall i,j\in V(G)$ hold, if $d_i\leq d_j $ then $ \Str(i)\leq \Str(j)$, then
    $$
        1-C_{WS}(G) \leq\frac 1 n \sumt_{i\in V(G)} \frac {\big(L(G)-1\big)(n-1)} {d_i(d_i-1)}+\frac {\text{number of pendant vertices}} n.
    $$
\end{thm}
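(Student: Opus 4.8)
The plan is to interpolate between two facts already in hand. From the corollary to Theorem~\ref{thm2} we have
$$1-C_{WS}(G)\ \le\ \frac 1n\sumt_{i\in V(G)}\frac{\Str(i)}{d_i(d_i-1)}+\frac{p}{n},$$
where $p$ is the number of pendant vertices and the summand is read as $0$ at pendant vertices. On the other hand, Theorem~\ref{thm1} --- and this is the only place geodeticity enters --- gives $\sumt_{i\in V(G)}\Str(i)=n(n-1)\big(L(G)-1\big)$, so that the quantity $(L(G)-1)(n-1)$ appearing in the claim is exactly the average $\frac1n\sumt_{j}\Str(j)$. Hence it would suffice to show that replacing each $\Str(i)$ by this average only increases the weighted sum with weights $\frac1{d_i(d_i-1)}$, i.e. to prove
$$\sumt_{i\in V(G)}\frac{\Str(i)}{d_i(d_i-1)}\ \le\ \Big(\frac1n\sumt_{j\in V(G)}\Str(j)\Big)\sumt_{i\in V(G)}\frac{1}{d_i(d_i-1)};$$
dividing by $n$, substituting $\frac1n\sumt_j\Str(j)=(L(G)-1)(n-1)$, and adding $\frac pn$ then yields the statement.

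To get that inequality I would order the non-pendant vertices by increasing degree. The hypothesis ``$d_i\le d_j\Rightarrow\Str(i)\le\Str(j)$'' says that along this order the sequence $\big(\Str(i)\big)$ is non-decreasing, while $\big(\tfrac1{d_i(d_i-1)}\big)$ is non-increasing since $d\mapsto\tfrac1{d(d-1)}$ decreases for $d\ge2$. Chebyshev's sum inequality for two oppositely ordered sequences then gives $\sumt_i \Str(i)\cdot\tfrac1{d_i(d_i-1)}\le\tfrac1{N}\big(\sumt_i\Str(i)\big)\big(\sumt_i\tfrac1{d_i(d_i-1)}\big)$, with $N$ the number of terms summed; since pendant vertices lie on no shortest path we have $\Str(i)=0$ there, so they contribute nothing and this is the estimate we want, with the substitution from Theorem~\ref{thm1} turning the right-hand side into $\tfrac1n\sumt_i\frac{(L(G)-1)(n-1)}{d_i(d_i-1)}$.

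The step that needs the most care --- and what I expect to be the main obstacle --- is the pendant-vertex bookkeeping: at such a vertex $d_i(d_i-1)=0$, so both $\frac{\Str(i)}{d_i(d_i-1)}$ and its replacement $\frac{(L(G)-1)(n-1)}{d_i(d_i-1)}$ have to be interpreted with the same convention as in the corollary to Theorem~\ref{thm2} (namely as $0$, using $\Str(i)=0$, which is consistent with the monotonicity hypothesis), and one must verify that after restricting the two sums to the vertices of degree $\ge2$ the normalization constant $N$ in Chebyshev's inequality is the one that makes the two sides of the target display line up, with the leftover accounted for exactly by the $\tfrac pn$ term inherited from Theorem~\ref{thm2}'s corollary. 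Apart from that reconciliation the argument is just the assembly of Theorem~\ref{thm1}, the corollary to Theorem~\ref{thm2}, and Chebyshev's sum inequality, so I anticipate no genuinely new difficulty.
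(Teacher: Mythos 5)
Your proposal follows essentially the same route as the paper's proof: order the vertices by degree, combine Theorem~\ref{thm1} (which gives $\sumt_{i\in V(G)}\Str(i)=n(n-1)\big(L(G)-1\big)$) with Theorem~\ref{thm2} and its corollary, and apply Chebyshev's sum inequality to the oppositely ordered sequences $\Str(i)$ and $1/\big(d_i(d_i-1)\big)$. The pendant-vertex bookkeeping you single out as the delicate point is handled no more carefully in the paper itself, which proves the pendant-free case and then simply adds the $p/n$ term to the right-hand side without addressing the normalization question you raise.
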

\begin{proof}
Let's re-numerate vertices such that $\forall i\leq j:d_i\leq d_j$. Then for $i\leq j$ hold $\Str(i)\leq \Str(j)$ and $d_i(d_i-1)\leq d_j(d_j-1)$. By theorems~\ref{thm1} and~\ref{thm2} for the case if there is no pendant vertices in graph
    $$\hspace{-150pt}
        1-C_{WS}(G)\leq \frac 1 n \sumt_{i\in V(G)} \frac {Str(i)} {d_i(d_i-1)}\stackrel{\text{Chebyshev's sum inequality}}{\leq} 
    $$
    $$\hspace{70pt}
        \leq\Big(\frac 1 n \sumt_{i\in V(G)} {Str(i)}\Big) \Big(\frac 1 n \sumt_{i\in V(G)} \frac {1} {d_i(d_i-1)} \Big)= \frac 1 n\sumt_{i\in V(G)} \frac {\big(L(G)-1\big)(n-1)} {d_i(d_i-1)}.
    $$
    For pendant vertices we should add $\frac {\text{number of pendant vertices}} n$ in the right side. Note that if there exist two vertices $i,j\in V(G)$ such that $d_i < d_j $ and $ \Str(i)< \Str(j)$ then the inequality in this theorem will be strict.
\end{proof}

Let's consider a star $G$ with $V(G) = n+1$ vertices. The star is geodetic graph. The central vertex has degree $n$, local clustering coefficient $c_i = 0$ and stress centrality $\Str(i) = n(n-1)$. Other vertices are pendant ($d_i = 1,\ c_i = 0,\ \Str(i) = 0$). Thus for this graph holds theorem~\ref{thm6}. 
$$
L(G) = \frac {n(2n-1)+n} {(n+1)n} = \frac {2n} {n+1},\quad C_{WS} = 0.
$$
$$
1-C_{WS}(G) = 1 = \frac {\frac {n-1} {n+1}\, n} {n(n-1)}+\frac n {n+1} =  \frac 1 {n+1} \sumt_{i\in V(G)} \frac {\big(L(G)-1\big)n} {d_i(d_i-1)}+\frac {\text{number of pendant vertices}} {n+1}.
$$
Thus, for this example holds equality.


\begin{thebibliography}{99}

\bibitem{Borgatti}
Borgatti S. P., Everett M. G. A graph-theoretic perspective on centrality //Social networks. 2006. \bf{28}. №~4. 466--484.

\bibitem{Kiss}
Kiss C., Bichler M. Identification of influencers—measuring influence in customer networks //Decision Support Systems. 2008. \bf{46}. №~1. 233--253.

\bibitem{Lee} 
Lee S. H. M., Cotte J., Noseworthy T. J. The role of network centrality in the flow of consumer influence //Journal of Consumer Psychology. 2010. \bf{20}. №~1. 66--77.

\bibitem{Watts} 
Watts D. J., Strogatz S. H. Collective dynamics of ‘small-world’networks //nature. 1998. \bf{393}. №~6684. 440--442.


\end{thebibliography}
\end{document}